\newtheorem{definition}{Definition}
\newtheorem{lemma}{Lemma}
\newtheorem{corollary}{Corollary}
\newtheorem{theorem}{Theorem}
\newtheorem{example}{Example}
\DeclareMathOperator{\supp}{supp}
\DeclareMathOperator{\Id}{Id}
\def\todo#1{\textcolor{Rot}{TODO\ifthenelse{\equal{#1}{}}{}{: #1}}}
\begin{document}

%%%%%%%%%%%%%%%%%%%%%%%%%%%%%%%%%%%%%%%%%%%%%
%%%%%%%%%%%%%%%%%%%%%%%%%%%%%%%%%%%%%%%%%%%%%%

%;\deftripstyle{mathe-paper}[0pt][0pt]%
%;	{R.\,Mühlhoff}
%;	{}
%;	{Cauchy problem, Green's functions and algebraic quantization}
%;	{\smaller\smaller\texttt{v\dashdate\today-\thistime}}
%;	{\textnormal{\thepage}}% Kopf rechts
%;	{}
%;\deftripstyle{mathe-paper-first}[0pt][0pt]%
%;	{}
%;	{}
%;	{}
%;	{\smaller\smaller\texttt{v\dashdate\today-\thistime}}
%;	{\textnormal{\thepage}}% Kopf rechts
%;	{}

\preprint{AIP/123-QED}

\title[Rainer M\"{u}hlhoff -- Cauchy Problem, Green's Functions and Algebraic Quantization]{Cauchy
Problem and Green's Functions for First Order Differential Operators and
Algebraic Quantization}% Force line breaks with \\
%\thanks{Footnote to title of article.}

\author{Rainer M\"{u}hlhoff}%
 \email{rainer.muehlhoff@tu-berlin.de}%
 \affiliation{Berlin Mathematical School, TU Berlin, Mathematisches Institut,
 MA 2-2, Stra\ss e des 17. Juni 136, 10623 Berlin, Germany}% 
 \altaffiliation{now at Technische Universit\"{a}t Berlin, Room EB 238-240,
 Sekr. EB 4-1, Stra\ss e des 17. Juni 145, 10623 Berlin, Germany}%

\date{January 22, 2010}

\begin{abstract}
	\textbf{Abstract.} 
	Existence and uniqueness of advanced and retarded fundamental
	solutions (Green's functions) and of global solutions to the Cauchy
	problem is proved for a
	general class of first order linear differential operators on vector
	bundles over globally hyperbolic Lorentzian manifolds.	
	This is a core ingredient to \mbox{CAR-/}CCR-algebraic
	constructions of quantum field theories on curved spacetimes,
	particularly for higher spin field equations. 

	\bigskip
	\noindent This article appeared in J. Math. Phys. 52, 022303 (2011);
	doi:10.1063/1.3530846 and may be found at
	\url{http://link.aip.org/link/?JMP/52/022303}.
\end{abstract}

\pacs{04.62.+v, 02.40.Vh}% PACS, the Physics and Astronomy
                             % Classification Scheme.
\maketitle

\section{Motivation and Physical Goal}

In the setting of general relativity, various types of physical fields
(and, in a suitable sense, also quantized fields) can be described
as solutions $\Phi$ to a differential equation
\[
P \Phi = 0\,, \quad\Phi \in \Gamma(\mathcal{E}) 
\,.\]
Here, $\mathcal{E}$ is a real or complex vector bundle over the Lorentzian
spacetime manifold $M$ with metric $g$, and $P$ is a linear differential
operator while $\Gamma(\mathcal{E})$ denotes the space of smooth sections in
$\mathcal{E}$.  For the scalar case where $P$ is the Klein-Gordon operator and
$\mathcal{E}$ is the complex line bundle on $M$ as well as for the spin
$\frac{1}{2}$ case where $P$ is the Dirac operator and $\mathcal{E}$ the
bundle of Dirac spinors on $M$, \cite{Dimock1980,Dimock1982} proposed
algebraic quantum field theory constructions based on certain $C^*$-algebras
(CAR-/CCR-algebras) of local observables, derived from the field operators
(for an outline of such a construction cf.\ sec.\
\ref{sec:quantisation-outline}).  It was shown that these quantum field
theories are in fact locally covariant, cf.\
\cite{BrunettiFredenhagenVerch2003}, \cite{Sanders2010}.  Core ingredients to
both the constructions and the general covariance result are the
well-posedness of the Cauchy problem and the existence of unique advanced and
retarded fundamental solutions (Green's functions) for the differential
operator $P$, which are both, in the end, consequences of global hyperbolicity
of $(M,g)$.

It is of course desirable to have such a quantum field theory for every type
of quantum fields (particularly for \textit{fields of higher spin}) or at
least to have established for which fields such a construction may be
possible.  This is why in this work, for a quite general class of first order
differential operators on globally hyperbolic spacetimes we present results on
existence of advanced and retarded Green's functions and well-posedness of the
Cauchy problem, which are crucial requirements for the CAR-/CCR-algebraic
quantum field theory construction as indicated above. 

While many of the classic references (like \onlinecite{Dimock1980,Dimock1982})
still had to refer back to the famous unpublished lecture notes by
\cite{Leray1953} when dealing with these questions for the scalar and spin
$\frac{1}{2}$ cases, meanwhile we have a powerful theory of \textit{second
order} normally hyperbolic differential operators on globally hyperbolic
spacetimes at hands, presented in modern mathematical language by B\"{a}r,
Ginoux, Pf\"{a}ffle (\citeyear{BaerGinPfae}). We shall make use of this work
to derive our results on the first order case. 

The starting point of our work was an investigation of the generalized Dirac
equations for higher spin fields as proposed by \cite{Buchdahl1982a}, for
which so far there existed only \textit{local} results on the well-posedness
of the Cauchy problem (cf.\ \onlinecite{Wuensch1985}).  It turned out that for
wide parts of our considerations (which may be found in full detail in
\onlinecite{Dipl}), committing to a special case of differential operator was
not necessary. The present paper presents the first and general part of our
work, while in a forthcoming publication we are planning to address
applications to Buchdahl's and other higher spin field equations.

\section{Introduction and Notation}

Throughout this article, let $(M,g)$ be an $n$-dimensional smooth, time
oriented, connected, globally hyperbolic Lorentzian manifold of signature
$(+--\ldots)$ and let $\mathcal{E}$ be a smooth, finite rank vector bundle on
$M$.  We denote the space of smooth sections of $\mathcal{E}$ by
$\Gamma(\mathcal{E})$ and the space of compactly supported smooth sections by
$\Gamma_0(\mathcal{E})$. 

A second order linear differential operator $L\colon \Gamma(\mathcal{E}) \to
\Gamma(\mathcal{E})$ is called \textbf{normally hyperbolic,} if its principal
symbol $\sigma_L$ ``is given by the metric'': 
\[
\forall x\in M\,\forall \xi\in T_x^*M\colon \sigma_L(\xi) = g(\xi,\xi)\,
\Id_{\mathcal{E}_x}
\,,\]
where $g$ also denotes the inverse metric of $g$ on $T^*M$. (For a definition
of the principal symbol $\sigma_L$ cf.\ e.\,g.\ \onlinecite{BaerGinPfae}.)
Equivalently this means that $L$ can locally be written as 
\[
L\Phi = g^{\mu\nu}\frac{\partial^2}{\partial x^\mu\partial x^\nu}\Phi +
\text{lower order derivatives of $\Phi$}
\,,\quad \Phi\in\Gamma(\mathcal{E}) 
\,.
\]
For normally hyperbolic (hence, second order) operators, the global Cauchy
problem and existence of advanced and retarded Green's functions 
were extensively treated in \cite{BaerGinPfae}. 
As this article is about \textit{first order} operators, we shall now define
a property which will play a role in our theory analogous to normal
hyperbolicity for the second order case: 

\begin{definition} \label{def:pre-norm-hyp}
	We call a first order linear differential operator $P\colon
	\Gamma(\mathcal{E})\to \Gamma(\mathcal{E})$ \textbf{prenormally
	hyperbolic,} if there is another first order linear differential
	operator $Q\colon \Gamma(\mathcal{E})\to \Gamma(\mathcal{E})$, such
	that $PQ$ is normally hyperbolic. 
	We call such a pair of $P$ and $Q$ a \textbf{complementary pair of
	prenormally hyperbolic first order operators.}
\end{definition} 

\noindent As a first useful property of prenormally hyperbolic differential
operators, we shall prove: 

\begin{lemma} \label{lem:invertibility-sigma-P}
	Let $P\colon\Gamma(\mathcal{E})\to \Gamma(\mathcal{E})$ be a
	prenormally hyperbolic first order differential operator. 
	Denote by $\sigma_P$ its principal symbol. Then for every $x\in M$
	and for every $\xi\in T^*_xM$ with $g_x(\xi,\xi)\ne 0$, the
	endomorphism $\sigma_P(\xi)\colon \mathcal{E}_x\to \mathcal{E}_x$ is
	invertible. 
\end{lemma}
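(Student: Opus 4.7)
The plan is to exploit the multiplicativity of the principal symbol under composition of differential operators, together with the defining property of normal hyperbolicity and finite-dimensionality of the fibers of $\mathcal{E}$.

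First I would invoke the hypothesis: since $P$ is prenormally hyperbolic, there exists a first order operator $Q$ on $\Gamma(\mathcal{E})$ such that $PQ$ is normally hyperbolic. The principal symbol is multiplicative in the sense that for any two differential operators $A,B$ of orders $a,b$, one has $\sigma_{AB}(\xi) = \sigma_A(\xi)\,\sigma_B(\xi)$ at each cotangent vector $\xi\in T_x^*M$. Applied to our situation, this gives $\sigma_P(\xi)\,\sigma_Q(\xi) = \sigma_{PQ}(\xi)$ as endomorphisms of $\mathcal{E}_x$.

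Next I would use the definition of normal hyperbolicity for $PQ$ to identify the right hand side: $\sigma_{PQ}(\xi) = g_x(\xi,\xi)\,\Id_{\mathcal{E}_x}$. Fix now $x\in M$ and $\xi\in T_x^*M$ with $g_x(\xi,\xi)\ne 0$. Then the composition $\sigma_P(\xi)\,\sigma_Q(\xi)$ equals a nonzero scalar multiple of the identity on $\mathcal{E}_x$ and is therefore invertible, with explicit inverse $g_x(\xi,\xi)^{-1}\Id_{\mathcal{E}_x}$. In particular, $g_x(\xi,\xi)^{-1}\sigma_Q(\xi)$ is a right inverse for $\sigma_P(\xi)$.

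Finally, because $\mathcal{E}$ is a finite rank vector bundle, $\mathcal{E}_x$ is a finite-dimensional vector space, so any endomorphism of $\mathcal{E}_x$ admitting a one-sided inverse is in fact a two-sided inverse, and hence $\sigma_P(\xi)$ is invertible. I do not anticipate any real obstacle here; the only point requiring mild care is citing the multiplicativity of principal symbols and being explicit about the finite-dimensionality step, since without it right-invertibility would not suffice.
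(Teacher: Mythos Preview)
Your proof is correct and follows essentially the same approach as the paper: both use multiplicativity of the principal symbol and the normal hyperbolicity of the composite to exhibit a one-sided inverse of $\sigma_P(\xi)$, then conclude invertibility. The only cosmetic difference is that the paper works with $\sigma_{QP}(\xi)=\sigma_Q(\xi)\circ\sigma_P(\xi)$ (yielding a left inverse) rather than your $\sigma_{PQ}(\xi)$ (yielding a right inverse), and leaves the finite-dimensionality step implicit; your version is arguably more careful, since it invokes directly the definition of prenormal hyperbolicity (existence of $Q$ with $PQ$ normally hyperbolic) and makes the finite-rank argument explicit.
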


\begin{proof}
	Let $Q\colon\Gamma(\mathcal{E})\to \Gamma(\mathcal{E})$ be another
	prenormally hyperbolic first order differential operator such that
	$Q,P$ form a complementary pair. Denote by $\sigma_Q$ the principal
	symbol of $Q$. 
	For the principal symbol $\sigma_{QP}$ of $QP$ we find at $x$: 
	$\sigma_{QP}(\xi) = \sigma_Q(\xi)\circ \sigma_P(\xi) 
	= g(\xi,\xi)\,\Id_{\mathcal{E}_x}$.
	Hence, $\sigma_P(\xi)$ is an automorphism if $g(\xi,\xi)\ne 0$. 
\end{proof}

\noindent It turns out that in definition \ref{def:pre-norm-hyp}, instead of
$PQ$ being normally hyperbolic we could equivalently demand that $QP$ be
normally hyperbolic: 

\begin{lemma}
	Let $P,Q\colon\Gamma(\mathcal{E})\to \Gamma(\mathcal{E})$ be first order
	linear differential operators. Then if $PQ$ is normally hyperbolic, 
	$QP$ is normally hyperbolic, too. 
\end{lemma}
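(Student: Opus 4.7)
The plan is to show that the principal symbol of $QP$ equals $g(\xi,\xi)\,\Id_{\mathcal{E}_x}$ for every $x\in M$ and every $\xi\in T_x^*M$, which is precisely normal hyperbolicity. Since the composition of two first order operators yields an operator of order at most two, and since the principal symbol of a composition is the composition of the principal symbols, we have on the one hand the given identity
\[
\sigma_P(\xi)\circ \sigma_Q(\xi) \;=\; \sigma_{PQ}(\xi) \;=\; g(\xi,\xi)\,\Id_{\mathcal{E}_x}\,,
\]
and on the other hand the quantity $\sigma_{QP}(\xi) = \sigma_Q(\xi)\circ \sigma_P(\xi)$ which we want to identify with $g(\xi,\xi)\,\Id_{\mathcal{E}_x}$.

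I would first handle the generic case $g(\xi,\xi)\ne 0$. By Lemma~\ref{lem:invertibility-sigma-P} applied to $P$ (with $Q$ as its complement), $\sigma_P(\xi)$ is invertible on the finite dimensional fibre $\mathcal{E}_x$. Consequently, from $\sigma_P(\xi)\sigma_Q(\xi) = g(\xi,\xi)\,\Id_{\mathcal{E}_x}$ one can solve $\sigma_Q(\xi) = g(\xi,\xi)\,\sigma_P(\xi)^{-1}$, so that
\[
\sigma_Q(\xi)\circ\sigma_P(\xi) \;=\; g(\xi,\xi)\,\sigma_P(\xi)^{-1}\circ\sigma_P(\xi) \;=\; g(\xi,\xi)\,\Id_{\mathcal{E}_x}\,.
\]
This establishes the desired identity on the complement of the light cone in $T_x^*M$.

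To conclude, I would use a continuity argument to extend the identity to all covectors. The map $\xi\mapsto \sigma_Q(\xi)\circ\sigma_P(\xi) - g(\xi,\xi)\,\Id_{\mathcal{E}_x}$ is polynomial in $\xi\in T_x^*M$ (principal symbols of first order operators are linear in $\xi$ and the metric contribution is quadratic), hence continuous; the set $\{\xi\in T_x^*M \mid g(\xi,\xi)\ne 0\}$ is open and dense in $T_x^*M$ because its complement is the null cone. Thus the identity $\sigma_{QP}(\xi)=g(\xi,\xi)\,\Id_{\mathcal{E}_x}$ holds on all of $T_x^*M$, for every $x\in M$, proving that $QP$ is normally hyperbolic.

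I do not expect a real obstacle here: the nontrivial ingredient is Lemma~\ref{lem:invertibility-sigma-P}, which has already been proved, and the only subtlety is the passage from $g(\xi,\xi)\ne 0$ to the null cone, which is immediate from continuity and density. One should just remember that although $QP$ is \textit{a priori} only of order at most two, the computed principal symbol $g(\xi,\xi)\,\Id_{\mathcal{E}_x}$ is nonvanishing, so $QP$ is genuinely of order two.
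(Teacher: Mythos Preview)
Your proof is correct and follows essentially the same route as the paper: use multiplicativity of principal symbols, invoke Lemma~\ref{lem:invertibility-sigma-P} for non-null $\xi$ to swap $\sigma_P(\xi)$ and $\sigma_Q(\xi)$, and then extend to the null cone by continuity. The paper phrases the swap as ``in groups right inverses are also left inverses'' rather than explicitly solving $\sigma_Q(\xi)=g(\xi,\xi)\,\sigma_P(\xi)^{-1}$, but this is the same argument; your added remarks on polynomial dependence, density of the non-null set, and genuine second order of $QP$ are welcome details the paper leaves implicit.
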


\begin{proof}
	$PQ$ being normally hyperbolic means at every point $x\in M$: 
	\[
	\forall \xi\in T^*_xM\colon 
	\sigma_P(\xi) \circ \sigma_Q(\xi) =
	g_x(\xi,\xi)\,\Id_{\mathcal{E}_x}
	\,.\]
	First, let the co-vector $\xi$ be non-null. Then by lemma
	\ref{lem:invertibility-sigma-P}, $\sigma_P(\xi)$ and $\sigma_Q(\xi)$
	are invertible and thus elements of $\mathrm{GL}(\mathcal{E}_x)$. 
	As in groups right inverses are also left inverses, 
	we can deduce that 
	\[
	\sigma_Q(\xi) \circ \sigma_P(\xi) =
	g_x(\xi,\xi)\,\Id_{\mathcal{E}_x}
	\,.\] 
	Finally, as both sides of this equation depend continuously on $\xi$,
	it also holds for null co-vectors. 
	Hence we conclude that $QP$ is normally hyperbolic. 
\end{proof}

\begin{example}
	Let $M$ have a spin structure, let $\mathcal{D}M$ be the spinor bundle
	on $M$, let $D\colon \Gamma(\mathcal{D}M) \to \Gamma(\mathcal{D}M)$ be
	the Dirac operator\footnote{%
		For general terminology, cf.\ e.\,g.\ \cite{BerGetVer}
		or \cite{FewsterVerch2002}.%
	} and let $A\colon \Gamma(\mathcal{D}M) \to
	\Gamma(\mathcal{D}M)$ be any operator of order 0 (i.\,e.\ a linear map
	on the fiber). 
	
	Then $D+A$ is prenormally hyperbolic since using
	the Lichnerowicz formula we find: 
	\[
	(D+A)(D-A) = \Box + \frac{\mathrm{scal}}{4} - [D, A] - A^2
	\,,\] 
	where $\mathrm{scal}$ denotes the scalar curvature of the metric $g$
	on $M$ and $\Box$ denotes the Laplacian with respect to the covariant
	derivative on $\mathcal{D}M$ induced by the Levi-Civita covariant
	derivative on $TM$ (d'Alembertian). $\Box$ is the only
	second order
	term and it is normally hyperbolic. Hence, $(D+A)(D-A)$ is normally
	hyperbolic. 
\end{example}

\section{Existence of Green's functions}

As \cite{Dimock1982} did for the Dirac special case, we shall now
establish existence and uniqueness of advanced and retarded Green's operators
(in Dimock's terminology: fundamental solutions) for our general class of
prenormally hyperbolic operators. We do this by generalizing Dimock's proof,
facilitating results from \cite{BaerGinPfae}. 

\begin{definition}
	\label{def:greens-operators}
	Let $P\colon \Gamma(\mathcal{E})\to \Gamma(\mathcal{E})$
	be a first order linear differential operator on sections of
	$\mathcal{E}$.  
	Linear maps $G_\pm\colon \Gamma_0(\mathcal{E})\to \Gamma(\mathcal{E})$
	are called \textbf{advanced ($-$) resp.\ retarded ($+$) Green's
	functions for $P$}, if 
	\begin{enumerate}[(i)]
		\item $\displaystyle P\circ G_\pm =
			\Id_{\Gamma_0(\mathcal{E})}$\,, 
			
		\item $\displaystyle G_\pm\circ P =
			\Id_{\Gamma_0(\mathcal{E})}$\,, 

		\item $\displaystyle \forall \varphi\in
			\Gamma_0(\mathcal{E})\colon
			\supp(G_\pm\varphi)\subseteq J_\pm(\supp(\varphi))$
			\,.
	\end{enumerate}
\end{definition}

\noindent Here, for $A\subseteq M$, $J_+(A)$ resp.\ $J_-(A)$ denotes the
\textbf{causal future resp.\ past} of the set $A$, i.\,e.\ the set of points
in $M$, which are either points in $A$ or which can be reached from a point in
$A$ by a future resp.\ past directed, piecewise $C^1$ curve. 

\begin{theorem}%[Green's functions for prenormally hyperbolic operators]
	\label{thm:greens-operators}
	Let $P\colon \Gamma(\mathcal{E})\to \Gamma(\mathcal{E})$ be a 
	prenormally hyperbolic first order linear differential operator on
	sections of $\mathcal{E}$. 
	Then there exist unique advanced and retarded
	Green's functions $S_\pm\colon \Gamma_0(\mathcal{E})\to
	\Gamma(\mathcal{E})$ for $P$. 
\end{theorem}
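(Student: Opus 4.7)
The plan is to reduce the problem to the well-established theory of normally hyperbolic second order operators from Bär-Ginoux-Pfäffle. By hypothesis there exists a first order operator $Q\colon \Gamma(\mathcal{E})\to\Gamma(\mathcal{E})$ such that $L := PQ$ is normally hyperbolic, and by the previous lemma $QP$ is then normally hyperbolic as well. Bär-Ginoux-Pfäffle provide unique advanced and retarded Green's functions $\tilde G_\pm\colon \Gamma_0(\mathcal{E})\to\Gamma(\mathcal{E})$ for $L$. The natural candidate for a Green's function of $P$ is then
\[
S_\pm := Q \circ \tilde G_\pm\,.
\]

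For existence, properties (i) and (iii) are essentially immediate: one has $P S_\pm = (PQ)\tilde G_\pm = L \tilde G_\pm = \Id$, and since $Q$ is a differential operator it does not enlarge supports, so for any $\varphi\in\Gamma_0(\mathcal{E})$, $\supp(S_\pm \varphi)\subseteq\supp(\tilde G_\pm \varphi)\subseteq J_\pm(\supp\varphi)$. Property (ii) is the delicate one. I would verify it as follows: for $\varphi\in\Gamma_0(\mathcal{E})$ set $\psi := S_\pm P\varphi - \varphi$. Then (i) gives $P\psi = P S_\pm P\varphi - P\varphi = 0$, and $\supp\psi\subseteq J_\pm(\supp\varphi)$ is past (resp.\ future) compact. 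Applying $Q$ yields $(QP)\psi = 0$, so $\psi$ is a past/future compactly supported smooth solution of the homogeneous normally hyperbolic equation $QP\psi=0$. The standard uniqueness theorem for normally hyperbolic operators on globally hyperbolic spacetimes (Bär-Ginoux-Pfäffle) then forces $\psi = 0$, hence $S_\pm P = \Id$.

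Uniqueness of $S_\pm$ is obtained by the same reduction: if $S_\pm'$ is another Green's function for $P$, then for every $\varphi\in\Gamma_0(\mathcal{E})$ the section $\chi := (S_\pm - S_\pm')\varphi$ satisfies $P\chi = 0$ with $\supp\chi\subseteq J_\pm(\supp\varphi)$; applying $Q$ and invoking uniqueness for $QP$ again yields $\chi = 0$. The main obstacle in this outline is precisely the verification of (ii): the construction $S_\pm = Q\tilde G_\pm$ manifestly produces only a right inverse of $P$, and one might be tempted to introduce separate Green's operators for $QP$ and prove $Q\tilde G_\pm = \hat G_\pm Q$. The key insight that avoids this detour is that the discrepancy $S_\pm P\varphi - \varphi$ automatically carries the support properties required to invoke the second order uniqueness theorem for $QP$, which lets us bootstrap right invertibility to two-sided invertibility in one stroke.
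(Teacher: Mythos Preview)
Your argument is correct and complete, but it establishes property~(ii) and uniqueness by a genuinely different route than the paper. The paper passes to the dual bundle: it constructs Green's operators $S'_\pm = Q^*G'_\pm$ for the formal adjoint $P^*$, proves the duality relation $\langle S'_\mp\psi,f\rangle = \langle\psi,S_\pm f\rangle$ via an integration by parts (whose boundary terms vanish by the compactness of $J_\mp(\supp\psi)\cap J_\pm(\supp f)$ on globally hyperbolic spacetimes), and then reads off $S_\pm P = \Id$ as the adjoint of $P^*S'_\mp = \Id$. Uniqueness follows from the same duality identity. Your approach bypasses the dual bundle entirely: you observe directly that the defect $\psi = S_\pm P\varphi - \varphi$ (respectively $\chi = (S_\pm - S'_\pm)\varphi$) solves the normally hyperbolic equation $QP\psi = 0$ with support in $J_\pm$ of a compact set, and invoke the vanishing of such solutions (equivalently, uniqueness in the Cauchy problem with trivial data on a Cauchy surface lying to the past/future of that compact set). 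Your route is more self-contained and avoids introducing $\mathcal{E}^*$, $P^*$, $Q^*$; the paper's route, on the other hand, yields as a byproduct the Green's operators for $P^*$ together with the duality $S_\pm = (S'_\mp)^*$, which can be useful in applications.
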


\begin{proof} 
	Choose $Q\colon \Gamma(\mathcal{E})\to \Gamma(\mathcal{E})$ such that 
	$P,Q$ form a complementary pair of prenormally hyperbolic first
	order operators (i.\,e.\ $PQ$ and $QP$ are normally hyperbolic). 
	Let $\mathcal{E}^*$ denote the dual bundle of $\mathcal{E}$ and let 
	$P^*,Q^*\colon \Gamma(\mathcal{E}^*)\to\Gamma(\mathcal{E}^*)$ denote
	the formal adjoints\footnote{%
		$P^*$ is the formal adjoint of $P$, if 
		$\forall \varphi\in\Gamma_0(\mathcal{E})\,~\forall
		\psi\in\Gamma_0(\mathcal{E}^*)\colon \int_M\psi(P\varphi) =
		\int_M(P^*\psi)\varphi$, where integration is with respect to
		the volume density induced by the Lorentz metric on $M$.%
	} of $P$ resp.\ $Q$. 
	
	As $QP$ and $PQ$ are normally hyperbolic, so are
	$(QP)^* = P^*Q^*$ and $(PQ)^* = Q^*P^*$\,\footnote{%
		The formal adjoint of a normally hyperbolic operator is again
		normally hyperbolic \cite[7.2.4]{Dipl}.%
	}. This enables us to apply 
	\onlinecite[corollary 3.4.3]{BaerGinPfae}, which says that there exist
	unique advanced and retarded Green's functions $G_\pm\colon
	\Gamma_0(\mathcal{E})\to \Gamma(\mathcal{E})$ for $PQ$
	and $G'_\pm\colon \Gamma_0(\mathcal{E}^*)\to
	\Gamma(\mathcal{E}^*)$ for $P^*Q^*$. 

	Define $S_\pm := QG_\pm$ and $S'_\pm := Q^*G'_\pm$. We will show that 
	$S_\pm$ are advanced/retarded Green's functions for $P$ 
	and that they are unique. 
	As a byproduct we will find that $S'_\pm$ are advanced/retarded
	Green's functions for $P^*$. We have to check all three conditions
	from definition \ref{def:greens-operators}: 

	\begin{enumerate}[\quad (i)]
		\item $PS_\pm|_{\Gamma_0(\mathcal{E})} 
			= PQG_\pm|_{\Gamma_0(\mathcal{E})}
			=\Id_{\Gamma_0(\mathcal{E})}$ and
			$P^*S'_\pm|_{\Gamma_0(\mathcal{E^*})}
			=P^*Q^*G'_\pm|_{\Gamma_0(\mathcal{E^*})}
			=\Id_{\Gamma_0(\mathcal{E}^*)}$ as $G_\pm$ resp.\
			$G'_\pm$ are Green's functions for $PQ$ resp.\
			$P^*Q^*$. 
			
		\item[(iii)] 
			For all $\varphi\in\Gamma_0(\mathcal{E})$ we have 
			\[
			\supp S_\pm\varphi 
			= \supp QG_\pm\varphi 
			\subseteq \supp G_\pm \varphi
			\subseteq J_\pm (\supp \varphi)
			\,,
			\]
			where the second $\subseteq$ makes use of the support
			property of $G_\pm$. 
			The argument for $S'_\pm$ is the same. 

		\item[(ii)] 
			We have to show $S_\pm
			P|_{\Gamma_0(\mathcal{E})}=
			\Id_{\Gamma_0(\mathcal{E})}$
			and $S'_\pm P^*|_{\Gamma_0(\mathcal{E}^*)}=
			\Id_{\Gamma_0(\mathcal{E}^*)}$. 
			Fix some arbitrary $\psi\in \Gamma_0(\mathcal{E}^*)$
			and $f\in \Gamma_0(\mathcal{E})$. Writing 
			$\langle \psi, f\rangle  := \int_M \psi(f)$,
			we find 
			\[
			\langle S'_\mp \psi, f\rangle  
			= \langle S'_\mp\psi, PS_\pm f\rangle \nonumber 
			\overset{(*)}{=} \langle P^*S'_\mp\psi, S_\pm
			f\rangle  
			= \langle \psi, S_\pm f\rangle 
			\,. \label{eqn-star2}
			\]
			The partial integration $(*)$ has no boundary terms 
			because $\supp(S'_\mp \psi)\cap \supp(S_\pm f)$
			is a subset of $J_\mp(\supp \psi) \cap J_\pm(
			\supp f)$, which is compact due to global
			hyperbolicity\footnote{%
				If $K,K'\subseteq M$ are two compact subsets
				of a globally hyperbolic $(M,g)$, then
				$J_+(K)\cap J_-(K')$ is compact
				\cite[A.5.4]{BaerGinPfae}.%
			}. 
	
			We found that $S_\pm = (S'_\mp)^*$. From
			$P^*S'_\mp=\Id_{\Gamma_0(\mathcal{E}^*)}$ we
			get, by forming the adjoint,
			$\Id_{\Gamma_0(\mathcal{E})} = (S'_\mp)^*P =
			S_\pm P$. Analogously one can deduce
			$\Id_{\Gamma_0(\mathcal{E}^*)} = S'_\pm P^*$.
	\end{enumerate}	
	So we find that $S_\pm$ are advanced/retarded Green's functions for
	$P$.  To show uniqueness let $T_\pm$ be another pair of advanced and
	retarded Green's functions for $P$. Equation (\ref{eqn-star2}) remains
	valid if we replace $S_\pm$ by $T_\pm$ (so that it reads $\langle
	S'_\mp\psi,f\rangle = \ldots = \langle\psi,T_\pm f\rangle$) and hence
	we obtain $T_\pm = S_\pm$.  
\end{proof}

\section{The Cauchy Problem}

In this section we will prove: 

\begin{theorem}%[Cauchy problem for prenormally hyperbolic first order operators] 
	\label{thm:cauchy-problem}
	Let $P\colon\Gamma(\mathcal{E})\to\Gamma(\mathcal{E})$ be a 
	prenormally hyperbolic first order linear differential operator on
	the vector bundle $\mathcal{E}$ on $(M,g)$ and let $\Sigma\subseteq M$
	be a smooth spacelike Cauchy hypersurface. Then the Cauchy problem 
	\[
	(P)\quad 
	\begin{cases}
		P\Phi = 0,\quad \Phi\in \Gamma(\mathcal{E}) \\
		\Phi|_\Sigma = \Phi_0
	\end{cases}
	\]
	has a unique solution for every initial datum $\Phi_0\in
	\Gamma_0(\mathcal{E}|_\Sigma)$. Moreover, this solution satisfies
	$\supp \Phi\subseteq J(\supp \Phi_0)$. 
\end{theorem}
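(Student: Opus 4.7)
The plan is to reduce the first-order Cauchy problem for $P$ to the well-posed second-order Cauchy problem for the normally hyperbolic operator $QP$, where $Q$ is chosen so that $P,Q$ form a complementary pair. The crucial algebraic ingredient will be lemma \ref{lem:invertibility-sigma-P}: since $\Sigma$ is spacelike, its unit conormal $\nu$ satisfies $g_x(\nu,\nu)\ne 0$, so both $\sigma_P(\nu)$ and $\sigma_Q(\nu)$ are fibrewise invertible along $\Sigma$.

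For existence, given $\Phi_0\in\Gamma_0(\mathcal{E}|_\Sigma)$, I would first manufacture a second datum $\Phi_1\in\Gamma_0(\mathcal{E}|_\Sigma)$ to serve as the prescribed transverse derivative $\nabla_\nu\Phi|_\Sigma$ of the sought solution. Restricting $P\Phi$ to $\Sigma$ decomposes canonically as
\[
(P\Phi)|_\Sigma = \sigma_P(\nu)\,\nabla_\nu\Phi|_\Sigma + R\Phi_0,
\]
where $R$ collects the (at most first-order tangential) dependence on $\Phi|_\Sigma=\Phi_0$. Invertibility of $\sigma_P(\nu)$ lets me define $\Phi_1 := -\sigma_P(\nu)^{-1}R\Phi_0$, which is smooth and compactly supported in $\supp\Phi_0$, and is tailored so that any $\Phi$ with Cauchy data $(\Phi|_\Sigma,\nabla_\nu\Phi|_\Sigma)=(\Phi_0,\Phi_1)$ automatically satisfies $(P\Phi)|_\Sigma = 0$. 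Applying the normally hyperbolic Cauchy theorem of \cite{BaerGinPfae} to $QP$ with initial data $(\Phi_0,\Phi_1)$ then yields a unique $\Phi\in\Gamma(\mathcal{E})$ with $QP\Phi = 0$ and $\supp\Phi\subseteq J(\supp\Phi_0)$.

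It remains to promote $QP\Phi = 0$ to $P\Phi = 0$. Setting $\Psi := P\Phi$, the identity $PQ\Psi = P(QP\Phi) = 0$ means $\Psi$ solves the normally hyperbolic equation $PQ\Psi = 0$, and by construction $\Psi|_\Sigma = 0$. To conclude $\Psi\equiv 0$ via uniqueness for $PQ$, I also need $\nabla_\nu\Psi|_\Sigma = 0$; but $Q\Psi = QP\Phi = 0$ everywhere, and restricting to $\Sigma$ while using $\Psi|_\Sigma = 0$ (which wipes out the tangential and zeroth-order contributions) leaves $\sigma_Q(\nu)\nabla_\nu\Psi|_\Sigma = 0$, from which invertibility of $\sigma_Q(\nu)$ forces the desired vanishing. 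The uniqueness part of $(P)$ is analogous: if $P\Phi = 0$ and $\Phi|_\Sigma = 0$, the same algebraic argument on $\Sigma$ gives $\nabla_\nu\Phi|_\Sigma = 0$, whereupon the uniqueness part of the Cauchy problem for $QP$ yields $\Phi\equiv 0$.

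The hard part, and the point where the prenormal hyperbolicity assumption is essential, is reconciling the single piece of Cauchy data natural for a first-order equation with the two pieces demanded by the second-order theory: lemma \ref{lem:invertibility-sigma-P} is what allows the second datum to be algebraically extracted from the first so as to enforce $(P\Phi)|_\Sigma = 0$, and the same invertibility is what propagates this initial vanishing to $P\Phi\equiv 0$ globally via the auxiliary hyperbolic equation satisfied by $\Psi = P\Phi$.
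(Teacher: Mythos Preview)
Your proposal is correct and follows essentially the same strategy as the paper: both reduce to the second-order Cauchy problem for $QP$ by using invertibility of $\sigma_P$ along $\Sigma$ to define the missing transverse datum, and then show $P\Phi=0$ by applying the second-order uniqueness theorem to $\Psi=P\Phi$ via the operator $PQ$. The only cosmetic difference is that the paper packages the argument through auxiliary Cauchy problems $(\widetilde{QP})$ and $(\widetilde{PQ})$ (with the constraint $(P\Phi)|_\Sigma=0$ resp.\ $(Q\Phi)|_\Sigma=0$ in place of the normal derivative), whereas you work directly with the pair $(\Phi_0,\Phi_1)$ and the vanishing of $\nabla_\nu\Psi|_\Sigma$; unwinding the paper's lemma yields precisely your computation with $\sigma_Q(\nu)$.
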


\noindent Here, we denote by $\mathcal{E}|_\Sigma$ the restriction of the
bundle $\mathcal{E}$ on $M$ to the submanifold $\Sigma\subseteq M$. Moreover,
we set $J(K) := J_+(K) \cup J_-(K)$. 

Inspired by \cite{Dimock1982}, the idea of the proof is to reduce our first
order problem to a Cauchy problem for a second order normally hyperbolic
operator, and then we make use of the theory presented in \cite{BaerGinPfae}. 
Let $P,Q$ be a complementary pair of prenormally hyperbolic first oder
operators on 
$\mathcal{E}$ and let $\mathfrak{n}$ be the future directed unit normal vector
field along $\Sigma$. We define the following auxiliary Cauchy problems: 

\begin{tabular}{ll} 
	\begin{minipage}{70mm}
\begin{eqnarray*}
	(QP) \quad &&
\begin{cases}
	QP\Phi = 0, \quad \Phi\in \Gamma(\mathcal{E}) \\
	\Phi|_\Sigma = \Phi_0 \\
	(\nabla_{\mathfrak{n}}\Phi)|_\Sigma = \Psi_0
\end{cases} \\
&& \text{for given $\Phi_0,\Psi_0 \in \Gamma_0(\mathcal{E}|_\Sigma)$.}
\end{eqnarray*}
\end{minipage}&\begin{minipage}{70mm}
\begin{eqnarray*}
	(\widetilde{QP}) \quad &&
\begin{cases}
	QP\Phi = 0, \quad \Phi\in \Gamma(\mathcal{E}) \\
	\Phi|_\Sigma = \Phi_0 \\
	(P\Phi)|_\Sigma = 0
\end{cases} \\
&& \text{for given $\Phi_0 \in \Gamma_0(\mathcal{E}|_\Sigma)$.}
\end{eqnarray*}
\end{minipage}\\\begin{minipage}{70mm}
\begin{eqnarray*}
	(PQ) \quad &&
\begin{cases}
	PQ\Phi = 0, \quad \Phi\in \Gamma(\mathcal{E}) \\
	\Phi|_\Sigma = \Phi_0 \\
	(\nabla_{\mathfrak{n}}\Phi)|_\Sigma = \Psi_0
\end{cases} \\
&& \text{for given $\Phi_0,\Psi_0 \in \Gamma_0(\mathcal{E}|_\Sigma)$.}
\end{eqnarray*}
\end{minipage}&\begin{minipage}{70mm}
\begin{eqnarray*}
	(\widetilde{PQ}) \quad &&
\begin{cases}
	PQ\Phi = 0, \quad \Phi\in \Gamma(\mathcal{E}) \\
	\Phi|_\Sigma = \Phi_0 \\
	(Q\Phi)|_\Sigma = 0
\end{cases} \\
&& \text{for given $\Phi_0 \in \Gamma_0(\mathcal{E}|_\Sigma)$.}
\end{eqnarray*}
\end{minipage}
\end{tabular}

\begin{lemma}\label{lem:cauchy-probs}\mbox{ }\nopagebreak%
	\begin{enumerate}[(a)]
		\item
			If $\Phi\in\Gamma(\mathcal{E})$ solves $\widetilde{QP}$
			for initial datum $\Phi_0$, then $\Phi$ solves
			$(QP)$ for initial data $\Phi_0$ and $\Psi_0:=
			\nabla_\mathfrak{n}\Phi|_{\Sigma}$.
			The analogous result holds for $(\widetilde{PQ})$
			and $(PQ)$.

		\item \label{lem:cauchy-probs:b}
			For given $\Phi_0\in\Gamma_0(\mathcal{E}|_\Sigma)$
			there always exists a unique solution $\Phi\in
			\Gamma(\mathcal{E})$ to $(\widetilde{QP})$. It
			satisfies $\supp \Phi\subseteq J(\supp \Phi_0)$.
			The analogous statement holds for $(\widetilde{PQ})$. 

		\item \label{lem:cauchy-probs:c}
			$\Phi\in\Gamma(\mathcal{E})$ solves $(P)$ for
			initial datum $\Phi_0$ if and only if $\Phi$ solves
			$(\widetilde{QP})$ for initial datum $\Phi_0$. 
	\end{enumerate}
\end{lemma}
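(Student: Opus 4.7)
My plan is to dispose of parts (a) and (c) by short formal manipulations and to concentrate on part (b), which carries the essential analytic content.

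For part (a), the first two conditions of $(QP)$ are literally the first two conditions of $(\widetilde{QP})$, and the third holds by definition of $\Psi_0 := \nabla_{\mathfrak{n}}\Phi|_\Sigma$. The only nontrivial clause is that $\Psi_0$ lies in $\Gamma_0(\mathcal{E}|_\Sigma)$, which follows from the support bound $\supp\Phi\subseteq J(\supp\Phi_0)$ established in part (b), together with the compactness of $J(K)\cap\Sigma$ for any compact $K\subseteq M$ in a globally hyperbolic spacetime.

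For part (b), the idea is to convert the nonstandard condition $(P\Phi)|_\Sigma=0$ into a genuine normal-derivative Cauchy datum and then to invoke the second-order Cauchy theorem of \cite{BaerGinPfae} applied to the normally hyperbolic operator $QP$. Working in Gaussian normal coordinates around $\Sigma$, write $P = A\,\nabla_{\mathfrak{n}} + B$, where $A := \sigma_P(\mathfrak{n}^\flat)$ and $B$ consists of derivatives tangent to $\Sigma$ and zero-order terms. Since $\mathfrak{n}$ is a unit timelike vector, $g(\mathfrak{n}^\flat,\mathfrak{n}^\flat) = 1 \neq 0$, so Lemma~\ref{lem:invertibility-sigma-P} guarantees that $A$ is invertible at every point of $\Sigma$. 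Consequently $(P\Phi)|_\Sigma=0$ is equivalent to $(\nabla_{\mathfrak{n}}\Phi)|_\Sigma = -A^{-1}\widetilde{B}\Phi_0$, where $\widetilde{B}$ is the operator on $\Gamma(\mathcal{E}|_\Sigma)$ induced by the tangential part of $B$. Setting $\Psi_0$ equal to this right-hand side, the problem $(\widetilde{QP})$ becomes the standard $(QP)$ Cauchy problem with smooth compactly supported initial data $(\Phi_0,\Psi_0)$ (note $\supp\Psi_0\subseteq\supp\Phi_0$, since differential operators are local), whose existence, uniqueness and support inclusion $\supp\Phi\subseteq J(\supp\Phi_0\cup\supp\Psi_0)=J(\supp\Phi_0)$ are supplied by \cite{BaerGinPfae}. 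The analogous statement for $(\widetilde{PQ})$ follows by interchanging the roles of $P$ and $Q$, which is legitimate because $(Q,P)$ is again a complementary pair by the preceding lemma.

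For part (c), the $(\Rightarrow)$ direction is immediate, since $P\Phi = 0$ yields both $QP\Phi = 0$ and $(P\Phi)|_\Sigma=0$. Conversely, if $\Phi$ solves $(\widetilde{QP})$, set $\Psi := P\Phi\in\Gamma(\mathcal{E})$; then $Q\Psi = QP\Phi = 0$ (so in particular $PQ\Psi = 0$), $\Psi|_\Sigma = (P\Phi)|_\Sigma = 0$, and a fortiori $(Q\Psi)|_\Sigma=0$. Hence $\Psi$ solves $(\widetilde{PQ})$ with the zero initial datum, and the uniqueness clause of part (b) forces $\Psi = 0$, i.e.\ $P\Phi = 0$. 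The main obstacle of the whole argument is the decomposition step in part (b): it is the only place where prenormal hyperbolicity is used in an essential way, via Lemma~\ref{lem:invertibility-sigma-P}, and it is what allows the nonstandard initial condition to be recast as a standard one so that the second-order theory of \cite{BaerGinPfae} becomes applicable; everything else is a short bookkeeping exercise.
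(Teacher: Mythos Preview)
Your proof is correct and follows essentially the same route as the paper: both hinge on the observation that, because $\sigma_P(\mathfrak{n}^\flat)$ is invertible by Lemma~\ref{lem:invertibility-sigma-P}, the condition $(P\Phi)|_\Sigma=0$ is equivalent to a specific normal-derivative datum $\Psi_0 = -[\sigma_P(\mathfrak{n}^\flat)]^{-1}(\text{tangential part})\,\Phi_0$, so that $(\widetilde{QP})$ reduces to $(QP)$ with data $(\Phi_0,\Psi_0)$ and the second-order theory of \cite{BaerGinPfae} applies. The only organizational difference is in part~(a): the paper reads off compact support of $\Psi_0$ directly from this formula (since $\supp\Psi_0\subseteq\supp\Phi_0$), whereas you take a detour through the support bound on $\Phi$ from part~(b) and the compactness of $J(K)\cap\Sigma$; your route is valid but less direct, and you should make explicit that it uses the \emph{uniqueness} clause of (b) to identify the given $\Phi$ with the solution constructed there.
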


\begin{proof} 
	As a preparatory consideration, assume we are given 
	$\Phi\in\Gamma(\mathcal{E})$ and
	$\Phi_0\in\Gamma_0(\mathcal{E}|_\Sigma)$ such that 
	$\Phi|_\Sigma = \Phi_0$. Moreover, fix a point $x\in\Sigma$ and a
	local pseudo-orthogonal frame $(e_1,\ldots,e_n)$ on a neighborhood
	$U$, $x\in U\subseteq M$, such that $e_1|_{U\cap \Sigma}
	=\mathfrak{n}$ ($e_2,\ldots,e_n$ are then tangential to $\Sigma$ on
	$\Sigma\cap U$). Then we have the following equivalence of
	statements along $\Sigma\cap U$: 
	\begin{eqnarray}
	&& 0 = (P\Phi)|_\Sigma 
	= \sum_{i=1}^{n} \sigma_P(e_i^*) \nabla_{e_i}\Phi|_\Sigma +
	R_{P,U}\Phi|_\Sigma \nonumber\\
	\Leftrightarrow\ \quad &&
	\sigma_P(\mathfrak{n}^*) \nabla_\mathfrak{n}\Phi|_\Sigma
	\label{eqn-star1} 
	= -[\sigma_P(\mathfrak{n}^*)]^{-1} 
	\left( \sum_{i=2}^{n} \sigma_P(e_i^*) \nabla_{e_i}\Phi_0
	+ R_{P,U}\Phi_0 \right) \nonumber
	\end{eqnarray} 
	Here, $\sigma_P\in\Gamma(T^*M\otimes\mathrm{End}(\mathcal{E}))$
	denotes the principal symbol of $P$, $R_{P,U}\colon
	\Gamma(\mathcal{E}|_U)\to \Gamma(\mathcal{E}|_U)$ is a linear operator
	of order 0, and for $v\in T_xM$,
	$v^*\in T_x^*M$ denotes the co-vector associated with $v$ by the metric
	($v^* = g(v, \cdot)$, this is usual index-shifting).  We also 
	used invertibility of $\sigma_P$ in 
	normal direction, which follows from lemma
	\ref{lem:invertibility-sigma-P}. We now prove the three statements: 

	\begin{enumerate}[(a)]
		\item Let a solution $\Phi$ to $(\widetilde{QP})$ for initial
			datum $\Phi_0$ be given. We only have to verify that
			$\Psi_0 := \nabla_\mathfrak{n}\Phi|_\Sigma$ has indeed
			compact support; but this follows immediately from
			(\ref{eqn-star1}). 

		\item \textbf{Existence and support property:} 
			Let $\Phi_0$ be given. Inspired 
			by (\ref{eqn-star1}), we set using local
			pseudo-orthogonal frames 
			$(\mathfrak{n}, e_2, \ldots, e_n)$ along $\Sigma$:
			\[
			\Psi_0 := -[\sigma_p(\mathfrak{n}^*)]^{-1}
			\left( \sum_{i=2}^{n}\sigma_P(e_i^*)\nabla_{e_i}\Phi_0
			+ R_{P,U}\Phi_0 \right)
			\] 
			(so $\Psi_0 \in\Gamma_0(\mathcal{E}|_\Sigma)$). 
			Notice that the right hand side does not
			depend on the choice of local frame and hence $\Psi_0$
			is well defined globally along $\Sigma$. 
			Its support is a subset of $\supp(\Phi_0)$ and thus
			compact. 
			Using \onlinecite[thm.\ 3.2.11]{BaerGinPfae}, there
			is a solution $\Phi$ to $(QP)$ for initial datum
			$(\Phi_0,\Psi_0)$, which moreover satisfies the support
			property. Finally, the preparatory consideration
			at the beginning of this proof shows
			that our choice of $\Psi_0$ is such that the
			solution $\Phi$ satisfies $(P\Phi)|_\Sigma = 0$; 
			thus, $\Phi$ is a solution to $(\widetilde{QP})$ for
			initial datum $\Phi_0$. 

			\textbf{Uniqueness:} 
			If $\Phi'$ is another solution to $(\widetilde{QP})$ 
			then  by (a) and (\ref{eqn-star1}), it solves $(QP)$
			for $\Phi_0$ and $\Psi'_0 = \Psi_0$. Thus, $\Phi' =
			\Phi$ by uniqueness of solutions to $(QP)$
			(cf.\ \onlinecite[thm.\ 3.2.11]{BaerGinPfae}). 

		\item Fix $\Phi_0\in\Gamma_0(\mathcal{E}|_\Sigma)$. If
			$\Phi\in\Gamma(\mathcal{E})$ solves $(P)$ it is
			trivial that it also solves $(\widetilde{QP})$. 
			To prove the opposite direction, let
			$\Phi\in\Gamma(\mathcal{E})$ solve $(\widetilde{QP})$,
			i.\,e.\ $QP\Phi=0$, $\Phi|_\Sigma=\Phi_0$,
			$(P\Phi)|_\Sigma=0$. 
			By multiplying the first equation by $P$ from the
			left and by restricting the first equation to $\Sigma$
			we easily obtain 
			$PQP\Phi=0$, $(P\Phi)|_\Sigma = 0$, $(QP\Phi)|_\Sigma
			=0$, which means that $P\Phi$ solves
			$(\widetilde{PQ})$ for initial datum $\equiv0$.
			According to (\ref{lem:cauchy-probs:b}), solutions to 
			$(\widetilde{PQ})$ are unique, 
			hence, $P\Phi=0$ and $\Phi$ solves $(P)$.  \qedhere
	\end{enumerate}
\end{proof}

\begin{proof}[Proof of theorem \ref{thm:cauchy-problem}]
	Now this follows immediately from  
	\ref{lem:cauchy-probs}-\ref{lem:cauchy-probs:b} using
	\ref{lem:cauchy-probs}-\ref{lem:cauchy-probs:c}. 
\end{proof}

\noindent Finally, as a corollary we may write down the following
compatibility result for compactly supported Cauchy data on two different
Cauchy hypersurfaces $\Sigma$ and $\Sigma'$: 

\begin{corollary}[compatibility of Cauchy data] \label{cor:compatibility}
	Let $P\colon\Gamma(\mathcal{E})\to \Gamma(\mathcal{E})$ be as in theorem
	\ref{thm:cauchy-problem} and let $\Sigma,\Sigma'\subseteq M$ be two
	smooth spacelike Cauchy hypersurfaces. Then the spaces of compactly
	supported Cauchy data, $\Gamma_0(\mathcal{E}|_\Sigma)$ and
	$\Gamma_0(\mathcal{E}|_{\Sigma'})$ are in a one-one relation determined
	by having the same solution to the Cauchy problem: The map 
	\begin{eqnarray*}
		\Gamma_0(\mathcal{E}|_\Sigma) &&\to
		\Gamma_0(\mathcal{E}|_{\Sigma'}) \\ 
		\Phi_0 &&\mapsto \Phi|_{\Sigma'}
		\,, \ \text{for $\Phi$ the unique solution} \\
		&& \qquad\qquad \text{such that
		$\Phi|_\Sigma = \Phi_0$\,,}
	\end{eqnarray*}
	is an isomorphism of $\mathbbm{C}$-vector spaces. 
\end{corollary}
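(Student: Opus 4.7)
The plan is to deduce everything directly from Theorem~\ref{thm:cauchy-problem} (existence, uniqueness, and the support estimate $\supp\Phi\subseteq J(\supp\Phi_0)$), together with the standard global-hyperbolicity fact that the intersection of $J(K)$ with a Cauchy hypersurface is compact whenever $K\subset M$ is compact. Let me denote the prospective map by $F\colon \Phi_0\mapsto \Phi|_{\Sigma'}$.

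First I would check that $F$ is well-defined, which is the only step requiring any real argument. Given $\Phi_0\in\Gamma_0(\mathcal{E}|_\Sigma)$, Theorem~\ref{thm:cauchy-problem} yields a unique smooth $\Phi$ with $P\Phi=0$, $\Phi|_\Sigma=\Phi_0$, and $\supp\Phi\subseteq J(\supp\Phi_0)$. Hence
\[
\supp(\Phi|_{\Sigma'}) \subseteq J(\supp\Phi_0)\cap \Sigma'.
\]
Since $\supp\Phi_0$ is a compact subset of $M$, the set $J(\supp\Phi_0)\cap\Sigma' = \bigl(J_+(\supp\Phi_0)\cap\Sigma'\bigr)\cup\bigl(J_-(\supp\Phi_0)\cap\Sigma'\bigr)$ is compact (this is the standard consequence of global hyperbolicity invoked in footnote~A.5.4 of \cite{BaerGinPfae}, applied with one of the compact sets being a thin slab around $\supp\Phi_0$, or equivalently by observing that a Cauchy hypersurface meets $J(K)$ in a compact set for any compact $K$). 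So $\Phi|_{\Sigma'}\in\Gamma_0(\mathcal{E}|_{\Sigma'})$ and $F$ is well-defined. Linearity of $F$ is immediate from the linearity of the Cauchy problem and the uniqueness clause (the solution depending on $\Phi_0$ is linear in $\Phi_0$).

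Next I would construct the inverse. By exactly the same argument, applied with the roles of $\Sigma$ and $\Sigma'$ swapped, Theorem~\ref{thm:cauchy-problem} gives a well-defined linear map $F'\colon \Gamma_0(\mathcal{E}|_{\Sigma'})\to \Gamma_0(\mathcal{E}|_\Sigma)$, $\Phi'_0 \mapsto \Phi|_\Sigma$, where $\Phi$ is the unique solution with $\Phi|_{\Sigma'}=\Phi'_0$. To see that $F'\circ F = \Id$, take $\Phi_0\in\Gamma_0(\mathcal{E}|_\Sigma)$ and let $\Phi$ be the solution arising in the definition of $F(\Phi_0)$. Then $\Phi$ also satisfies $P\Phi=0$ with prescribed data $\Phi|_{\Sigma'} = F(\Phi_0)$ on $\Sigma'$, so by uniqueness $\Phi$ is exactly the solution used to define $F'(F(\Phi_0))$, giving $F'(F(\Phi_0)) = \Phi|_\Sigma = \Phi_0$. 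The identity $F\circ F' = \Id$ follows symmetrically.

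The only non-routine point — and I would flag it as such — is the compactness of $J(\supp\Phi_0)\cap\Sigma'$, i.e.\ the well-definedness of $F$. Everything else is a direct consequence of uniqueness in the Cauchy problem. No further tools are required beyond Theorem~\ref{thm:cauchy-problem} and the cited compactness fact from \cite{BaerGinPfae}.
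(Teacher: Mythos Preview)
Your argument is correct and matches the paper's own proof: well-definedness via the support estimate together with compactness of $J(K)\cap\Sigma'$ for compact $K$, and bijectivity via uniqueness of solutions to the Cauchy problem. The paper merely states these two ingredients in a sentence each, whereas you have spelled out the inverse map and the verification of $F'\circ F=\Id$ explicitly, but the route is identical.
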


\begin{proof}
	This follows immediately by uniqueness of solutions to
	the Cauchy problem. To establish that $\Phi|_{\Sigma'}$ is compact,
	make use of the well known fact on globally hyperbolic spacetimes that
	for compact $K\subseteq M$, $J(K)\cap \Sigma'$ is compact. 
\end{proof}

\section{Application in CAR-algebraic Quantization}

\label{sec:quantisation-outline}

To illustrate an important application of our results, 
we shall briefly outline a basic 
quantization procedure using CAR-algebras for fermionic quantum fields
described by a partial differential equation $P\Phi = 0$  for
prenormally hyperbolic $P$ acting on sections of a vector bundle
$\mathcal{E}$ on $M$.  

Let $\Sigma\subseteq M$ be a smooth spacelike Cauchy hypersurface in $M$
with future directed unit normal vector field $\mathfrak{n}$ along $\Sigma$. 
We define the space of localized Cauchy data on $\Sigma$, 
\[
\mathscr{H}_\Sigma := \{ \Phi_0\in \Gamma_0(\mathcal{E}|_\Sigma)\}
\,,\]
and the space of solutions with localized Cauchy data,
\[
\mathscr{H} := \{ \Phi\in \Gamma(\mathcal{E})\,|\, P\Phi = 0\ \text{and}\
\Phi|_\Sigma\in \mathscr{H}_\Sigma \}
\,.\] 
Notice that due to the compatibility of compactly supported Cauchy data
(corollary \ref{cor:compatibility}), $\mathscr{H}$ is independent of the
choice of $\Sigma$. 

%------------------------------------

Now, assume there is a physically distinguished Hermitian inner product
$\langle \cdot, \cdot\rangle $ on $\mathscr{H}$ so that $(\mathscr{H}, \langle
\cdot, \cdot\rangle )$ forms a pre-Hilbert space.  Then the system could be
quantized by forming the CAR-algebra $\mathrm{CAR}(\mathscr{H}, \langle \cdot,
\cdot\rangle )$ (CAR = canonical anti-commutation relations, for definition of
the unique CAR-algebra generated by the elements of a pre-Hilbert space cf.\
\onlinecite{BraRob2}). 

To perform such a construction, it may often be easier to work on a Cauchy
hypersurface. For every smooth spacelike Cauchy hypersurface
$\Sigma$, the canonical $\mathbbm C$-vector space isomorphism 
\[
\mathscr{H} \to \mathscr{H}_\Sigma \,, \qquad 
\Phi \mapsto \Phi|_\Sigma  \,,
\]
(with inverse given by assigning to $\Phi_0$ the solution of the Cauchy
problem with initial datum $\Phi_0$, cf.\ theorem \ref{thm:cauchy-problem})
can be used to pull back a Hermitian inner product $\beta_\Sigma$ on
$\mathscr{H}_\Sigma$ to a Hermitian inner product $\beta$ on $\mathscr{H}$, so
that an isometric isomorphism $(\mathscr{H}, \beta)\cong (\mathscr{H}_\Sigma,
\beta_\Sigma)$ would be obtained. 
Of course, such a construction will only be satisfactory if $\beta$ is 
independent of the choice of Cauchy hypersurface $\Sigma$. 

%------------------------------------

\cite{Dimock1982} presents such a construction for the spin $\frac{1}{2}$
Dirac equation 
\[
(\gamma_a\nabla^a + m)\, \Phi = 0
\,,\qquad \Phi\in \Gamma(\mathcal{D}M)
\,,\] 
where $\mathcal{D}M$ denotes the bundle of Dirac spinors on $M$. 
Here, the principal symbol $\gamma_a\in T^*M\otimes
\mathrm{End}(\mathcal{D}M)$ of the Dirac operator $D = \gamma_a\nabla^a$ is
given by the Dirac matrices $\gamma_a$. 
The Hermitian product $\beta_\Sigma$ on $\mathscr{H}_\Sigma$ is given by 
\[
\beta_\Sigma(\Psi,\Phi) 
:= \int\limits_\Sigma \mathfrak{n}_a\,j^a(\Psi,\Phi)\,d\mu_\Sigma
%= \int\limits_\Sigma \langle \Psi^+, \mathfrak{n}^a\gamma_{a}(\Phi)\rangle
%\,d\mu_\Sigma 
\,,\] 
where $j_a(\Psi,\Phi) = \langle \Psi^+, \gamma_{a}(\Phi) \rangle$ is the
Dirac current (depending sesquilinearly on two Dirac spinor fields
$\Psi,\Phi\in\Gamma(\mathcal{D}M)$), 
$\Phi^+\in \Gamma(\mathcal{D}^*M)$ is the Dirac adjoint co-spinor field
of $\Phi$ (which depends complex anti-linearly on $\Phi$), 
$\langle \cdot, \cdot\rangle \colon
\Gamma(\mathcal{D}^*M)\times\Gamma(\mathcal{D}M)\to C^\infty(M;\mathbbm{C})$
denotes the canonical fiberwise pairing of co-spinor fields and spinor fields
and $d\mu_\Sigma$ is the metric-induced volume density on $\Sigma$. 

As can be shown, $\beta_\Sigma$ is indeed a Hermitian scalar product on
$\mathscr{H}_\Sigma$ (i.\,e.\ it is positive definite), and it does not
depend on the choice of Cauchy hypersurface (which is basically a consequence
of Stokes' theorem), in the sense that if for a
second smooth spacelike Cauchy hypersurface $\Sigma'$ we pull back
$\beta_{\Sigma'}$ from $\mathscr{H}_{\Sigma'}$ to $\mathscr{H}$, the
resulting Hermitian scalar product $\beta$ on $\mathscr{H}$ equals the one
obtained by pulling back $\beta_\Sigma$ from $\mathscr{H}_\Sigma$. In fact, we
thus have a whole chain of isometric isomorphisms of pre-Hilbert spaces, 
\[
(\mathscr{H}, \beta)
\cong (\mathscr{H}_\Sigma, \beta_\Sigma) 
\cong (\mathscr{H}_{\Sigma'}, \beta_{\Sigma'})
\cong \ldots
\,,\]
resulting in a chain of canonically isomorphic CAR-algebras 
\[
\mathrm{CAR}(\mathscr{H}, \beta)
\cong \mathrm{CAR}(\mathscr{H}_\Sigma, \beta_\Sigma) 
\cong \mathrm{CAR}(\mathscr{H}_{\Sigma'}, \beta_{\Sigma'})
\cong \ldots
\,.\] 

Imitating this basic construction for fields of higher spin is a challenging
task.  Generally, one would try to construct the Hermitian inner product
$\beta_\Sigma$ for a Cauchy hypersurface $\Sigma$ within a setting where the
field equation is given by a Lagrangian density and by making use of a
suitable concept of complex conjugation of fields. E.\,g.\ for the higher spin
field equations proposed by \cite{Buchdahl1982a} and reformulated by
\cite{Wuensch1985}, such a Lagrangian formulation was presented by
\cite{Illge1993}.  However, as was shown in \cite{Dipl}, the resulting product
$\beta_\Sigma$ fails to be positive definite (and thus it is not an Hermitian
inner product) as soon as spin is $>1$. 

This state of affairs motivates future research, illuminating in which cases
of higher spin prenormally hyperbolic partial differential equations a
quantum field theory based on a canonical choice of CAR- (or CCR-) algebra can
be constructed.

\begin{acknowledgments}
	These results go back to work under supervision of Prof.\ Rainer
	Verch, Leipzig, in 2007, cf.\ \cite{Dipl}. I would like to thank him
	as well as Prof.\ Christian B\"{a}r, Potsdam, for their support. 
\end{acknowledgments}

\bibliography{cauchy-problem.bib}

\end{document}